\renewcommand*{\NAT@spacechar}{~}
\author{Szymon Grabowski$^1$ \and Dominik K\"{o}ppl$^2$}
\date{{\small
  $^1$Lodz University of Technology, Institute of Applied Computer Science, 
Poland\\$^2$M\&D Data Science Center, Tokyo Medical and Dental University, Japan
  }
}
\theoremstyle{definition}
\newtheorem{theorem}{Theorem}[section]
\newtheorem{lemma}[theorem]{Lemma}
\newtheorem{corollary}[theorem]{Corollary}
\newcommand{\bsq}[1]{\lq{#1}\rq} \newcommand{\gauss}[1]{\left\lfloor#1\right\rfloor} \newcommand{\upgauss}[1]{\left\lceil#1\right\rceil} 
\newcommand{\UnaryOperator}[2][]{\ifx&#1&\ensuremath{\mathop{}\mathopen{}#2\mathopen{}}\else \ensuremath{\mathop{}\mathopen{}#2\mathopen{}(#1)}\fi }
\newcommand{\Oh}[1]{\UnaryOperator[#1]{\mathcal{O}}}
\newcommand{\oh}[1]{\UnaryOperator[#1]{o}}
\newcommand{\om}[1]{\UnaryOperator[#1]{\mathup{\omega}}}
\newcommand{\Om}[1]{\UnaryOperator[#1]{\mathup{\Omega}}}
\newcommand{\Ot}[1]{\UnaryOperator[#1]{\mathup{\Theta}}}
\DeclareMathAlphabet{\mathup}{OT1}{msb}{m}{n}
\newcommand{\maxLength}{\ensuremath{\ell_{\mathup{max}}}}
\newcommand*{\Codewords}{\ensuremath{\mathcal{C}}}
\newcommand*{\predecessor}{\mathop{pred}}
\newcommand*{\rank}{\mathop{rank}}
\newcommand*{\select}{\mathop{select}}
\newcommand*{\First}{\ensuremath{\mathsf{First}}}
\newcommand*{\Fusion}{\ensuremath{\mathsf{P}}}
\newcommand*{\BV}[1]{\ensuremath{\mathsf{B}_{\mathup{#1}}}}
\title{Space-Efficient Huffman Codes Revisited}
\begin{document}

\maketitle

\begin{abstract}
Canonical Huffman code is an optimal prefix-free compression code whose codewords enumerated in the lexicographical order form a list of binary words in non-decreasing lengths.
Gagie et al. (2015) gave a representation of this coding capable
to encode or decode a symbol in constant worst case time.
It uses $\sigma \lg \maxLength + \oh{\sigma}  + \Oh{\maxLength^2}$ bits of space, 
where $\sigma$ and $\maxLength$ are the alphabet size and maximum codeword length, respectively.
We refine their representation to reduce the space complexity to $\sigma \lg \maxLength (1 + \oh{1})$ bits while preserving the constant encode and decode times.
Our algorithmic idea can be applied to any canonical code.
\end{abstract}

\section{Introduction} \label{sec:intro}
Huffman coding~\cite{Huffman52,Moffat19}, 
whose compressed output closely approaches the zeroth-order empirical entropy,
is nowadays perceived as a standard textbook encoder, being the most prevalent option when it comes to statistical compression.
Unlike arithmetic coding~\cite{witten87arithmetic}, its produced code is \emph{instantaneous}, meaning that a character can be decoded as soon as the last bit of its codeword is read from a compressed input stream.
Thanks to this property, instantaneous codes tend to be swiftly decodable.
While most research efforts focus on the achievable decompression speeds, not much has been considered for actually representing the dictionary of codewords, which is usually done with a binary code tree, called the Huffman tree. 

Given that the characters of our text are drawn from an alphabet~$\Sigma$ of size~$\sigma$,
the \emph{Huffman tree} is a full binary tree, where each leaf represents a character of~$\Sigma$.
The tree stored in a plain pointer-based representation takes $\Oh{\sigma \lg \sigma}$ bits of space.
Here we assume that $\sigma \le n$ and $\Sigma = \{1, 2, \ldots, \sigma\}$, where $n$ is the number of characters of our text. A naive encoding algorithm for a character $c$ traverses the Huffman tree top-down to the
leaf representing $c$ while writing the bits stored as labels on the
traversed edges to the output. Similarly, a decoding algorithm traverses the tree top-down
based on the read bits until reaching a leaf. Hence, the decoding and
encoding need $\Oh{\maxLength}$ time, where $\maxLength$ is the height of
the tree.
If $\sigma$ is constant, then this representation is optimal with
respect to space and time (since $\maxLength$ becomes constant).
Therefore, the problem we study becomes interesting if $\sigma = \omega(1)$,
which is the setting of this article. 

The alphabet size $\sigma$ is often much smaller than $n$. Yet in some cases the Huffman tree space becomes non-negligible:
For instance, let us consider that $\Sigma$ represents a set of distinct words of a language.
Then its size can 
even exceed a 
million\footnote{Two such examples are the Polish Scrabble dictionary with over 3.0M words (\url{https://sjp.pl/slownik/growy/}) and the Korean dictionary \emph{Woori Mal Saem} with over 1.1M words (\url{https://opendict.korean.go.kr/service/dicStat}).}
for covering an entire natural language. 
In another case, we maintain a collection of texts compressed with the same Huffman codewords under the setting that a single text needs to be shipped together with the codeword dictionary, for instance across a network infrastructure.
A particular variation are \emph{canonical Huffman codes}~\cite{SK64}, 
where leaves of the same depth in the Huffman tree are lexicographically ordered with respect to the characters they represent.
An advantage of canonical Huffman codes is that 
several techniques (see~\cite{liddell06decoding,moffat97implementation} and the references therein) 
can compute the lengths of a codeword from the compressed stream by reading bit chunks instead of single bits,
thus making it possible to decode a character faster than an approach linear in the bit-length of its codeword.
Moffat and Turpin \cite[Algorithm TABLE-LOOKUP]{moffat97implementation} were probably the first to notice that it is enough to (essentially) perform binary search over a collection of the lexicographically smallest codewords for each possible length, to decode the current symbol.
The same idea is presented perhaps more lucidly in \cite[Section~2.6.3]{navarro16compact}, with an explicit claim that this representation requires $\sigma \lg \sigma + \Oh{\lg^2 n}$ bits\footnote{By $\lg$ we mean the logarithm to base two ($\log_2$) throughout the paper. Occasional logarithms to another base $b$ are denoted with $\log_b$.} and achieves \Oh{\lg \lg n} 
time per codeword. 
Given $\maxLength$ is the maximum length of the codewords,
an improvement, both in space and time, has been achieved by
\citet{gagie15efficient},
who gave a representation of the canonical Huffman tree within 
\begin{itemize}
   \item    $\sigma \lg \maxLength + \oh{\sigma}  + \Oh{\maxLength^2}$ bits of space with $w \in \Om{\maxLength}$ \cite[Theorem 1]{gagie15efficient}, or
   \item    $\sigma \lg \lg (n/\sigma) + \Oh{\sigma} + \Oh{\maxLength^2}$ bits of space with $w \in \Om{\lg n}$ \cite[Corollary 1]{gagie15efficient},\footnote{Although the space is stated as $\sigma \lg \lg (n/\sigma) + \Oh{\sigma} + \Oh{\lg^2 n}$ bits in that paper, we could verify with the authors that the last term can be improved to the bounds written here.}
\end{itemize}
while supporting character encoding and decoding in \Oh{1} time.
Their approach consists of a wavelet tree, a predecessor data structure~$\Fusion$, and an array~$\First$ storing the lexicographically smallest codeword for each depth of the Huffman tree.
The last two data structures take account for the last term in each of the space complexities above.

\paragraph{Our Contribution}

Our contribution is a joint representation of the predecessor data structure $\Fusion$ with the array $\First$ 
to improve their required space of $\Oh{\maxLength^2}$ bits down to $\oh{\maxLength \lg \sigma + \sigma}$ bits,
and this space is in $\oh{\sigma \lg \maxLength}$ bits since $\maxLength \leq \sigma - 1$.
In other words, we obtain the following theorem.

\begin{theorem}\label{thmOverallResult}
There is a data structure using $\sigma \lg \maxLength (1 + \oh{1})$ bits of
space,
which can encode a character to a canonical Huffman codeword or restore a
character from a binary input stream of codewords, both in constant time per
character.
\end{theorem}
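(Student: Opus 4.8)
The plan is to keep the first of the three ingredients in the construction of Gagie et al.\ verbatim: the sequence that records, for each of the $\sigma$ symbols, the length of its canonical codeword, stored in a structure that supports access together with the $\rank$ and $\select$ queries needed by the encoder and decoder in $\Oh{1}$ time and $\sigma\lg\maxLength + \oh{\sigma}$ bits. What I would replace are the other two ingredients, the array $\First$ and the predecessor structure $\Fusion$, which jointly account for the $\Oh{\maxLength^2}$-bit term, by a single compact encoding of the \emph{length histogram} $n_1, \dots, n_\maxLength$, where $n_\ell$ is the number of codewords of length $\ell$. The reason this suffices is that, for a canonical code, the histogram already determines $\First$ and $\Fusion$ completely: writing $f_\ell$ for the integer $\First[\ell]$ read over $\ell$ bits, one has $f_{\ell+1} = 2(f_\ell + n_\ell)$, so the left-justified values $\First'[\ell] := f_\ell \cdot 2^{\maxLength-\ell} = \sum_{j<\ell} n_j 2^{\maxLength-j}$ form a non-decreasing sequence of at most $\maxLength$ integers in $[0,2^{\maxLength})$ whose consecutive gaps $n_\ell 2^{\maxLength-\ell}$ tile $[0,2^{\maxLength})$ by Kraft's equality. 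Decoding the length of the next codeword from the $\maxLength$-bit window $W$ peeked off the input stream is then exactly the question ``into which tile does $W$ fall'', i.e.\ a predecessor query for $W$ in $\First'$; encoding a symbol needs only its codeword length (from the retained ingredient), its rank within its length class, and the value $f_\ell$.

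First I would store the histogram as a bitvector $\BV{n}$ of length $\sigma + \maxLength$ obtained by concatenating, for $\ell = 1, \dots, \maxLength$, a run of $n_\ell$ ones followed by one $0$; the $\maxLength$ separating zeros make $n_\ell$, every prefix sum $\sum_{j\le\ell} n_j$, and---once a symbol's codeword length is known---its rank within that length class recoverable by $\Oh{1}$ many $\rank$/$\select$ calls. Equipped with a compressed rank/select index, $\BV{n}$ can be made to occupy $\Oh{\maxLength \lg(2\sigma/\maxLength)} + \oh{\sigma}$ bits, which lies in $\oh{\maxLength\lg\sigma + \sigma}$, and since $\maxLength \le \sigma - 1$ forces $\maxLength\lg\sigma + \sigma = \Oh{\sigma\lg\maxLength}$, also in $\oh{\sigma\lg\maxLength}$. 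This already meets every requirement of the encoder except the value $f_\ell$, and every requirement of the decoder except the predecessor query itself.

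The main obstacle is precisely that: answering the predecessor query on $\First'$ (and returning $f_\ell$ on demand) in $\Oh{1}$ time inside this space budget. A textbook fusion tree over $\First'$ would bring the $\Oh{\maxLength^2}$ bits straight back, since it stores $\maxLength$ keys of $\maxLength$ bits each; and $\BV{n}$ only exposes \emph{unweighted} prefix counts, whereas $\First'[\ell] = \sum_{j<\ell} n_j 2^{\maxLength-j}$ is a \emph{power-of-two-weighted} prefix sum that cannot be formed naively in constant time. The idea I would pursue is never to materialize $\First'$ but to derive it from $\BV{n}$ plus a sublinear auxiliary index: cut the length range into consecutive blocks of length $\Ot{\sqrt{w}}$, index the at most $\Oh{\maxLength/\sqrt{w}}$ values $\First'[\cdot]$ at the block boundaries in a predecessor structure that exploits that these values are prefix sums of the histogram---so that they are stored within their $\Oh{\maxLength\lg(2\sigma/\maxLength)}$-bit total information content rather than at $\maxLength$ bits apiece---to locate in $\Oh{1}$ time the block containing $W$, and then resolve the exact length inside that block by word-parallel arithmetic on the $\Oh{\sqrt{w}}$ relevant bits of $W$ against the block-local slice of the histogram fetched from $\BV{n}$; because $w \in \Om{\maxLength}$, both $W$ and a block's worth of histogram fit into $\Oh{1}$ machine words, which is what makes the in-block step constant time, and the same two-level lookup restricted to a fixed $\ell$ yields $f_\ell$. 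The delicate points are certifying that the in-block step is genuinely $\Oh{1}$ and that the block index stays in $\oh{\sigma\lg\maxLength}$ bits uniformly over all relations between $\maxLength$ and $\sigma$. Adding the pieces, the whole representation uses $\sigma\lg\maxLength + \oh{\sigma} + \oh{\sigma\lg\maxLength} = \sigma\lg\maxLength(1+\oh{1})$ bits and supports encoding and decoding of a character in $\Oh{1}$ time; and nothing above used properties of Huffman codes beyond canonicity, so the construction applies to any canonical code.
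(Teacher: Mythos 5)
Your high-level replacement of $\First$ and $\Fusion$ by the length histogram $n_1,\dots,n_\maxLength$ is sound as far as it goes: the recurrence $f_{\ell+1}=2(f_\ell+n_\ell)$ is correct for canonical codes, Kraft's equality does make the left-justified values $\First'[\ell]=\sum_{j<\ell}n_j2^{\maxLength-j}$ tile $[0,2^\maxLength)$, and the bitvector $\BV{n}$ with rank/select support cheaply recovers the per-length counts and in-class ranks needed by the encoder. This is also a genuinely different route from the paper, which never looks at the histogram at all and instead proves a structural lemma (\cref{thmCodewordSuffixes}): every canonical codeword of length $\ell$ has a unary prefix of \bsq{\texttt{1}}s of length at least $\ell-\lg\sigma$. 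That lemma is what lets the paper \emph{truncate keys}: codewords are grouped into length-bands, within each band only a suffix of $\Oh{\lg\sigma}$ (or $\Oh{\lg\lg\sigma}$) bits is retained, and a fusion tree per band runs on those truncated keys. You instead keep full $\maxLength$-bit keys conceptually and try to avoid storing them by regenerating them from the histogram.

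That is precisely where your plan has a gap you flag but do not close, and I do not see how to close it inside your framework. The in-block step asks for the predecessor of a $\maxLength$-bit window $W$ among $\Ot{\sqrt{w}}$ values $\First'[\ell_1],\dots,\First'[\ell_1+\Ot{\sqrt{w}}-1]$, each a genuine $\maxLength$-bit quantity with no useful common prefix in general (e.g.\ if $n_{\ell_1}$ is close to $\sigma$, $\First'$ jumps across almost all of $[0,2^\maxLength)$ inside one block). Neither the block's slice of $\BV{n}$ (which can be $\Ot{\sigma}$ bits in unary) nor the $\Ot{\sqrt{w}}$ counts $n_j$ in binary ($\Ot{\sqrt{w}\lg\sigma}$ bits) nor the $\Ot{\sqrt{w}}$ full-width values $\First'[\cdot]$ ($\Ot{\sqrt{w}\,\maxLength}=\Ot{w^{3/2}}$ bits in the worst case under $w=\Om{\maxLength}$) fit into $\Oh{1}$ machine words, so ``word-parallel arithmetic against a block-local slice of the histogram'' is not a constant-time primitive as stated; you would need to materialize and compare $\maxLength$-bit prefix sums, which is again the problem you started with, scaled to $\sqrt{w}$ keys. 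A fusion tree on that block would cost $\Ot{\sqrt{w}\,\maxLength}$ bits, and there are $\Ot{\maxLength/\sqrt{w}}$ blocks, giving back $\Ot{\maxLength^2}$. What rescues the paper's construction at exactly this point is \cref{thmCodewordSuffixes}: after stripping the unary run of \bsq{\texttt{1}}s (which is computable in $\Oh{1}$ by a most-significant-set-bit query), the residual key is only $\Oh{\lg\sigma}$ bits, so fusion trees over these short keys are cheap and the total stays $\oh{\maxLength\lg\sigma+\sigma}$. If you want to pursue your histogram route, you would need an analogue of that truncation argument, i.e.\ a proof that the in-block predecessor can be decided from $\Oh{w}$ bits derived from $W$ and the histogram; without it, the constant-time claim is unsubstantiated.
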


We remark that when the alphabet consists of natural language words, like in the examples shown in the beginning, we often initially map words to IDs in an arbitrary manner.
For our use case, it would be convenient to assign the IDs based on their frequencies.
By doing so, the needed space of canonical Huffman representation is greatly reduced as the alphabet permutation can be considered as already given. 
In such a case, we only consider the extra space needed for the Huffman coding, where our improvement in space from $\Oh{\maxLength^2}$ down to $\oh{\maxLength \log \sigma}$ is more pronounced (in detail, we can omit the wavelet tree of \citeauthor{gagie15efficient}'s achieved space complexities).

\paragraph{Related Work}
A lot of research effort has been spent on the practicality of decoding Huffman-encoded texts (e.g.,~\cite{hirschberg90efficient,aggarwal00efficient,moffat97implementation,hashemian95memory,mansour07efficient})
where often a lookup table is employed for figuring out the lengths of the currently read codeword.
For instance, \citet{nekrich00decoding} added a lookup table for byte chunks of a codeword to figure out the length of a codeword by reading it byte-wise instead of bit-wise.
Regarding different representations of the Huffman tree,
Chowdhury et al.~\cite{chowdhury02efficient} came up with a solution using  $\upgauss{3\sigma/2}$ words of space,
which can decode a codeword in \Oh{\log \sigma} time.

In theory terms, important results were given by Gagie et al.~\cite{gagie15efficient} and Fari{\~n}a et al.~\cite{FGMNOspire16} (see also an extended version~\cite{farina2021efficient}).
The former is heavily cited in this work, and is discussed in detail in~\cref{secFormerApproach}.
In the latter, it is shown how to represent a particular optimal prefix-free code used for compressing wavelet matrices~\cite{CNOis14} in $\Oh{\sigma \lg\maxLength + 2^{\varepsilon \maxLength}}$ bits, allowing $\Oh{1/\varepsilon}$-time encoding and decoding, for a selectable constant $\varepsilon > 0$.
Like \citet{gagie15efficient}, they also use a wavelet tree of \Oh{\sigma \lg\maxLength} bits to map each character to the length of its respective codeword.
The tree topology is represented by counting level-wise the number of nodes and leaves, resulting in $\Oh{\maxLength \lg \sigma} \subset  \Oh{\sigma \lg \maxLength}$ bits.
With these two ingredients, this structure is already operational with \Oh{\maxLength} time per encoding or decoding operation.
To obtain the aforementioned time bounds, they sample certain depths of the code tree with lookup tables to speed-up top-down traversals.

Another line of research is on so-called skeleton trees~\cite{klein97space,klein00skeleton,kosolobov20optimal}.
The idea is to replace disjoint perfect subtrees in the canonical Huffman tree with leaf nodes.
A leaf node representing a pruned perfect subtree only needs to store the height of this tree and the characters represented by its leaves to be able to reconstruct it. 
Note that all leaves in a perfect binary tree are on the same level, and hence due to the restriction on the order of the leaves of the canonical Huffman tree, we can restore the pruned subtree by knowing its height and the set of characters.
Thus, a skeleton tree may use less space. 
Decompression can be accelerated whenever we hit a leaf~$\lambda$ during a top-down traversal, where we know that the next $h$ bits from the input are equal to the suffix of the currently parsed codeword if the leaf~$\lambda$ is the representative of a perfect subtree of height~$h$.
Unfortunately, the gain depends on the shape of the Huffman tree, and since there are no non-trivial theoretical worst-case bounds known, 
this approach can be understood as a heuristic so far.

\section{Preliminaries}
Our computational model is the standard word RAM with machine word size~$w = \Om{\lg n}$ bits,
where $n$ denotes the length of a given input string~$T[1..n]$ (called the \emph{text}) 
whose characters are drawn from an integer alphabet~$\Sigma = \{1, 2, \ldots, \sigma\}$ with size~$\sigma$ 
being bounded by $\sigma \le n$ and $\sigma = \om{1}$. 
We call the elements of $\Sigma$ \emph{characters}.
Given a string $S \in \Sigma^*$, 
we define the queries $S.\rank_c(i)$ and $S.\select_c(j)$ returning the number of $c$'s in $S[1..i]$ and the position of the $j$-th $c$ in $S$, respectively.
There are data structures that replace $S$ and use $|S|\lg \sigma + \oh{|S|}$ bits, and can answer each query in
\Oh{1 + \lg \log_w \sigma} time~\cite[Theorem 4.1]{belazzougui15wavelet}.
We further stipulate that $S.\select_c(0) := 0$ for any~$c \in \Sigma$.
In what follows, we assume that we have at least \Ot{w} bits of working space available for storing a constant number of pointers, and omit this space in the space analysis of this article.

\subsection{Code Tree}
A \emph{code tree} is a full binary tree whose leaves represent the characters of $\Sigma$.
A left or a right child of a node~$v$ is connected to~$v$ via an edge with label \texttt{0} or \texttt{1}, respectively.
We say that the \emph{string label} of a node~$v$ is 
the concatenation of edge labels on the path from the root to~$v$. 
Then the string label of a leaf~$\lambda$ is the codeword of the character represented by~$\lambda$.
The length of the string label of a node is equal to its depth in the tree.
Let $\Codewords$ denote the set of the string labels of all leaves.
The bit strings of $\Codewords$ are called \emph{codewords}.
A crucial property of a code tree is that $\Codewords$ is prefix-free, meaning that no codeword in $\Codewords$ is prefix of another.
A consequence is that, when reading from a binary stream of concatenated codewords, 
we only have to match the longest prefix of this stream with $\Codewords$ to decode the next codeword.
Hence, prefix-free codes are instantaneous, where an \emph{instantaneous code}, as described in the introduction, is a code with the property that a character can be decoded as soon as the last bit of its codeword is read from the compressed input.
Instantaneous codes are also prefix-free.
That is because, if a code is not prefix-free, then there can be two codewords $C_1$ and $C_2$ with $C_2$ being a prefix of $C_1$ such that when reading $C_2$ we need to read the next $|C_1|-|C_2|$ bits to judge whether the read bits represent $C_1$ or $C_2$.

A code tree is called \emph{canonical}~\cite{SK64} if its induced codewords
read from left to right are lexicographically strictly increasing, while 
their lengths are non-decreasing.
Consequently, a codeword of a shallow leaf is both shorter and
lexicographically smaller than the codeword of a deeper leaf.
We further assume that all leaves on the
same depth are sorted according to the order 
of their respective characters. 

Let $\maxLength$ denote the  maximum length of the codewords in \Codewords{}.
In the following we fix an instance of a code tree
for which $\maxLength{} \le \min(\sigma-1, \log_\phi n)$ holds, where
$\phi = (1 + \sqrt{5})/2 \approx 1.618$ is the golden ratio. 
It has been shown by \citet[comment after Theorem 2.1]{buro93maximum} that the canonical Huffman coding exhibits this property.

Our claims in \cref{sec:results,sec:warmup}, expressed in terms of the canonical Huffman codes and trees, are applicable to arbitrary canonical codes (resp.\ trees) if they obey the above bound on $\maxLength$.
Our focus on Huffman codes is motivated by the importance of this coding.

\subsection{Former Approach}\label{secFormerApproach}
We briefly review the approach of \citet{gagie15efficient},
which consists of three data structures: 
\begin{enumerate}
   \item a multi-ary wavelet tree storing, for each character, the length of its corresponding codeword,
   \item an array \First{} storing the codeword of the leftmost leaf on each depth, and
   \item a predecessor data structure~\Fusion{} for finding the length of a codeword that is the prefix of a read bit string.
\end{enumerate}
The first data structure represents an array~$L[1..\sigma]$ with $L[i]$ being the codeword length of the $i$-th character, i.e.,
$L$ is a string of length~$\sigma$ whose characters are drawn from the alphabet $\{1, 2, \ldots, \maxLength\}$. 
The wavelet tree built upon~$L$ can access $L$, and answer $L.\rank$ and $L.\select$ in \Oh{1 + \log_w \maxLength} time~\cite[Theorem 4.1]{belazzougui15wavelet}.
While a plain representation costs $\sigma \lg \maxLength + \oh{\sigma}$ bits,
the authors showed that they can use an empirical entropy-aware topology of the wavelet tree to get the space down to $\sigma \lg \lg (n/\sigma)$ bits,
still accessing~$L$, and answering $\rank$ and $\select$ in $\Oh{1 + \log_w \maxLength}$ time~\cite[Theorem 5.1]{belazzougui15wavelet}.
Since $\maxLength \in \Oh{\log n}$, $\Oh{\log_w \maxLength} \subset \Oh{1}$, and hence the time complexity is constant.

The second data structure~$\First$ is a plain array of length $\maxLength$. 
Each of its entries has $\maxLength$ bits, therefore it takes \Oh{\maxLength^2} bits in total.
It additionally stores $\maxLength$ in $\Oh{\lg \maxLength}$ bits (for instance, in a prefix-free code such as Elias-$\gamma$~\cite{elias75universal}).

The last data structure~$\Fusion$ is represented by a fusion tree~\cite{fredman93fusion} storing for each depth the lexicographically smallest codeword padded to $\maxLength$ bits at its right end, where we assume the least significant bit is stored. 
A fusion tree storing $m$ elements of a set~$\mathcal{S}$, each represented in $m$ bits, needs \Oh{m^2} bits of space and can answer the query 
$\predecessor(X)$ returning the predecessor of $X$ in $\mathcal{S}$, i.e., $\max \{ Y \in \mathcal{S} : Y \le X \}$
in $\Oh{\log_w m}$ time. 
For our application, the fusion tree~$\Fusion$ built on codewords computes $\predecessor$ in \Oh{\log_w \maxLength} time, and needs \Oh{\maxLength^2} bits.
Like above, the query time is constant since $\maxLength \in \Oh{\log n} \subset \Oh{w}$.
Here, we want $\Fusion$ to return the depth~$d$ of the returned codeword, not the codeword itself (which we could retrieve by $\First[d]$). 
Unfortunately, it is not explicitly mentioned by \citet{gagie15efficient} how to obtain this depth,
in particular when some depths may have no leaves.
We address this problem with the following subsection and explain after that how the actual computation is done (\cref{secQueries}).

\subsubsection{Missing Leaves}\label{secMissingLeaves}
To extract the depth from a predecessor query on \Fusion{},
we replace each right-padded codeword~$C_i$ by the pair~$(C_i, \ell_i)$ as the keys stored in \Fusion{}, where
$\ell_i$ is the length of $C_i$.
Such a pair of codeword and length is represented by the concatenation of the binary representations of its two components
such that we can interpret this  pair as a bit string of length~$\maxLength + \lg \maxLength$.
By slightly abusing the notation, we can now query 
$\predecessor((B, \ell))$ to obtain $(\bar{B}, \bar{\ell})$, 
i.e., the argument for $\predecessor(\cdot)$ is a pair rather than a single value and similarly the returned value is a pair, but physically the bit string $\bar{B}\bar{\ell}$ is a predecessor of $B\ell$.
Then $\predecessor((X,\mathtt{1}^{\lg \maxLength})) = (C_i,\ell_i)$ for a bit string~$X \in \{0,1\}^{\maxLength}$ gives us not only the predecessor $C_i$ of $X$, but also $\ell_i$. 
Storing such long bit strings poses no problem as the time complexity of $\predecessor(X)$ for bit string $X$ does not change in a fusion tree when the length of $X$ is, e.g., doubled.
Appendix~\ref{secAugmentation} outlines a different strategy augmenting the fusion tree with additional methods instead of storing the codeword lengths.

\subsubsection{Encoding and Decoding}\label{secQueries}
Finally, we explain how the operational functionality is implemented, 
i.e., the steps of how to encode a character, and how to decode a character from a binary stream.
We can encode a character $c \in \Sigma$ by first finding the depth of the leaf~$\lambda$ representing $c$ with $\ell = L[c]$,
which is also the length of the codeword to compute.
Given the string label of the leftmost leaf~$\lambda'$ on depth~$\ell$ is $\First[\ell]$, then all we have to do is increment the value of this codeword by the number of leaves between~$\lambda$ and $\lambda'$ on the same depth.
For that we compute $L.\rank_\ell(c)$ 
that gives us the number of leaves to the left of~$\lambda$ on the same depth~$\ell$.
Hence, $\First[\ell] + L.rank_\ell(c) - 1$ is the codeword of~$c$.
For decoding a character, we assume to have a binary stream of concatenated codewords as an input. 
We first use the predecessor data structure to figure out the length of the first codeword in the stream.
To this end, we peek the next $\maxLength$ bits in the binary stream, i.e., we read $\maxLength$ bits into a variable~$X$ from the stream \emph{without} removing them.
Given $(C', \ell) = \Fusion.\predecessor((X,\mathtt{1}^{\lg \maxLength}))$,
we know that $\First[\ell] = C'$ and that the next codeword has length~$\ell$.
Hence, we can read $\ell$ bits from the stream into a bit string~$C$, which must be the codeword of the next character.
The leaf having $C$ as its string label is on depth~$\ell$, and has the rank $r = C - \First[\ell] + 1$ among all other leaves on the same depth.
This rank helps us to retrieve the character having the codeword~$C$,
which we can find by $L.\select_\ell(r)$.

\section{A Warm-Up}\label{sec:warmup}

We start with a simple idea unrelated to the main contribution presented in the subsequent section.
The point is that it might have a (mild) practical impact.
Also, we admit this idea is not new (Fari{\~n}a et al.~\cite[Section 2.4]{farina2021efficient} attribute it to Moffat and Turpin~\cite{moffat97implementation}, although, in our opinion, it is presented rather in disguise), we however hope that the analysis given below is original and of
some value.

Navarro, in Section~2.6.3 of his textbook \cite{navarro16compact}, describes a simple solution (based on an earlier work of Moffat and Turpin \cite{moffat97implementation}) which gives \Oh{\lg \lg n} worst-case time for symbol decoding.
With the modification presented below, this worst-case time remains, but the average time bound becomes \Oh{\lg \lg \sigma}. 
It also means we can decode the whole text in \Oh{n \lg \lg \sigma} rather than \Oh{n \lg \lg n} time.

\newcommand*{\Key}{\mathsf{key}}
\newcommand*{\Pos}{\mathit{pos}}

The referenced solution is based on a binary search over a collection of the lexicographically smallest codewords for each possible length; the collection is ordered ascendingly by the codeword length.
We replace the binary search with the exponential search~\cite{BY76}, which finds the predecessor of item $\Key$ in a (random access) list $L$ of $m$ items in \Oh{\lg \Pos_L(\Key)} time, where $\Pos_L(\Key)$ is the position of $\Key$ in $L$. 
Exponential search is not faster than binary search in general, but may help if $\Key$ occurs close to the beginning of $L$.

The changed search strategy has an advantage whenever short codewords occur much more often than longer ones. 
Fortunately, we have such a distribution, which is expressed formally in the following lemma:

\begin{lemma} \label{thmRareCharOccs}
The number of occurrences of all characters in the text whose associated Huffman codewords have lengths exceeding $2\log_\phi \sigma$ is $\Oh{n/\sigma}$.
\end{lemma}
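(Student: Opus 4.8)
The plan is to bound, for each codeword length $\ell > 2\log_\phi \sigma$, the total number of text occurrences of characters whose codeword has that length, and then sum over all such $\ell$. The key fact I would exploit is a classical property of Huffman codes: if a character $c$ has a Huffman codeword of length $\ell$, then its frequency $f_c$ in the text satisfies $f_c / n < \phi^{-(\ell-2)}$ (equivalently $f_c < n\,\phi^{2-\ell}$). This is the standard ``golden ratio'' bound — the same one invoked earlier in the excerpt via \citet{buro93maximum} — and it follows because a long Huffman codeword forces the character's frequency to be small relative to the sums of sibling-subtree weights along its root-to-leaf path, and those weights grow at least like consecutive Fibonacci numbers.

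First I would fix $\ell > 2\log_\phi \sigma$. Since at most $\sigma$ characters exist in total, there are at most $\sigma$ characters with codeword length exactly $\ell$, so the number of text occurrences contributed by length-$\ell$ characters is at most $\sigma \cdot n\,\phi^{2-\ell} = \phi^2 n \sigma \phi^{-\ell}$. Next I would sum this geometric series over $\ell$ ranging from $\lceil 2\log_\phi \sigma\rceil + 1$ up to $\maxLength$:
\[
\sum_{\ell > 2\log_\phi \sigma} \phi^2 n \sigma \phi^{-\ell}
\;\le\; \phi^2 n \sigma \cdot \frac{\phi^{-2\log_\phi \sigma}}{1 - \phi^{-1}}
\;=\; \frac{\phi^2}{1-\phi^{-1}} \cdot \frac{n\sigma}{\sigma^2}
\;=\; \Oh{n/\sigma},
\]
using $\phi^{-2\log_\phi \sigma} = \sigma^{-2}$ and that $1/(1-\phi^{-1})$ is an absolute constant. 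This establishes the claimed $\Oh{n/\sigma}$ bound.

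The main obstacle — really the only non-routine point — is pinning down the per-character frequency bound $f_c < n\,\phi^{2-\ell}$ and making sure the constant in the exponent is the one that makes $2\log_\phi \sigma$ the correct threshold. I would either cite it directly from the Huffman-code literature (it is folklore, essentially the converse direction of Buro's analysis already referenced in the excerpt), or sketch the short argument: walk the path from the root to the leaf for $c$; at each of the $\ell$ steps the weight of the sibling subtree is at least the weight of the current subtree rooted further down, and by the Huffman merging invariant these weights dominate consecutive Fibonacci numbers scaled by $f_c$, forcing the root weight $n$ to be at least $F_{\ell+1} f_c \ge \phi^{\ell-2} f_c$ (with a suitable normalization of the Fibonacci indexing). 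Everything after that is the geometric-sum estimate above, which needs no care beyond tracking that the ratio $\phi^{-1} < 1$ keeps the series convergent.
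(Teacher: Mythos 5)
Your proof is correct, and it rests on the same key ingredient as the paper's: the Katona--Nemetz golden-ratio bound relating a character's Huffman codeword length to its frequency, which the paper cites directly as $d_c = \Oh{\log_\phi(n/f_c)}$ \cite{katona76huffman}. Where you differ is only in the final counting step. The paper observes that every qualifying character individually has frequency $f_c = \Oh{n/\sigma^2}$, and since there are at most $\sigma$ such characters, the total is $\Oh{n/\sigma}$ in one line. You instead partition the qualifying characters by exact codeword length $\ell$, bound each length class crudely by $\sigma$ characters, and sum a geometric series in $\phi^{-\ell}$; the decay then rescues the crude per-class count. Both are valid; the paper's route avoids the series entirely by noting that all the relevant frequencies are uniformly small, so it is a bit tighter and shorter, while yours would generalize more easily if you wanted finer per-length information. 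One small nit: you should make sure the claimed bound $f_c < n\,\phi^{2-\ell}$ carries an \emph{additive} constant in the exponent (as the Katona--Nemetz theorem actually gives), since a merely multiplicative $\Oh{\cdot}$ there would weaken $\sigma^2$ to $\sigma^{2/C}$; the paper's statement $d_c = \Oh{\log_\phi(n/f_c)}$ is itself written loosely in this respect, but the underlying theorem supports the sharper additive form you use.
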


\begin{proof}
For each character $c \in \Sigma$, 
let $f_c$ denote the number of occurrences of $c$ in the text 
and $d_c$ denote the depth of its associated leaf in the Huffman tree, which is also the length of its associated codeword.
Here we are interested in those characters $c$ for which $d_c > 2 \log_\phi \sigma = \log_\phi (\sigma^2)$. 
Since $d_c = \Oh{\log_\phi(n/f_c)}$~\cite[Thm~1]{katona76huffman} and
$\log_{\phi}$ is a strictly increasing function, we have $n / f_c = \Omega(\sigma^2)$, 
or, equivalently, $f_c = \Oh{n / \sigma^2}$.
The number of characters we deal with is upper-bounded by $\sigma$, therefore the total number of their occurrences is $\Oh{n / \sigma}$, which ends the proof. 
\end{proof}

For the \Oh{n/\sigma} characters specified in 
Lemma~\ref{thmRareCharOccs}
the exponential search works in 
\Oh{\min\{\lg\sigma, \lg\lg n\}} time.
However, for the remaining \Ot{n} characters
of the text the exponential search works in only \Oh{\lg(2\log_\phi \sigma)} = \Oh{\lg\lg\sigma} time.
Overall, the average time is \Oh{\lg\lg\sigma}, which is also the total average character decoding time, as finding the lexicographically smallest codeword with the appropriate length is, in general, more costly than all other required operations, which work in constant time.

\section{Multiple Fusion Trees} \label{sec:results}

In what follows, we present our space efficient representation of \Fusion{} and \First{} for achieving the result claimed in \cref{thmOverallResult}.
In \cref{secSuffixLength}, we start with the key observation that long codewords have a necessarily long prefix of ones.
This observation lets us group together codewords of roughly the same lengths, storing only the suffixes that distinguish them from each other.
Hence, we proceed with partitioning the set of codewords into codewords of different lengths, and orthogonal to that, of different distinguishing suffix lengths.

\subsection{Distinguishing Suffixes}\label{secSuffixLength} 
Let us notice at the beginning that a long codeword has a necessarily long prefix of ones: 

\begin{lemma}\label{thmCodewordSuffixes}
Let $\Codewords = \{C_1, \ldots, C_{\sigma}\}$ be a canonical code. 
Each codeword $C_i \in \Codewords$ can be represented as a binary sequence $C_i = \mathtt{1}^j b_{j+1} b_{j+2} \ldots b_{|C_i|}$ 
for some $j \geq 0$,
where 
$b_k \in \{0,1\}$ for all $k \in [j+1..|C_i|]$, 
and 
either (a) $|C_i| = j$, or (b) $b_{j+1} = 0$ and $j \ge |C_i| - \lg \sigma$. 
\end{lemma}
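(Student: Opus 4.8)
The plan is to prove that if case~(a) fails for a codeword $C_i$ — i.e., $C_i$ is not the all-ones string — then its leading run of ones is forced to be almost as long as $C_i$ itself, purely because the underlying code tree is full and the code is canonical. Fix $C_i\in\Codewords$, write $\ell:=|C_i|$, and let $j$ be the length of the maximal prefix of ones of $C_i$. If $j=\ell$ we are in case~(a) and there is nothing to do, so assume $0\le j<\ell$; maximality of the run then already gives $b_{j+1}=\mathtt 0$, and it remains to establish $j\ge \ell-\lg\sigma$.

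First I would single out the decisive node of the code tree. Since $C_i=\mathtt 1^{j}\mathtt 0\,b_{j+2}\cdots b_\ell$, the root-to-leaf path of $C_i$ passes through the node $u$ with string label $\mathtt 1^{j}$, and $u$ is a \emph{proper} ancestor of a leaf, hence internal; let $v_L$ and $v_R$ be its left and right children, with string labels $\mathtt 1^{j}\mathtt 0$ and $\mathtt 1^{j+1}$. The leaf of $C_i$ lies in the subtree $T_L$ rooted at $v_L$, while every leaf of the subtree $T_R$ rooted at $v_R$ has a string label beginning with $\mathtt 1^{j+1}$ and is therefore lexicographically larger than $C_i$. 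Reading the leaves from left to right is the same as reading their string labels in lexicographic order, and by canonicity the codeword lengths along this order are non-decreasing; hence every leaf of $T_R$ has depth at least $\ell$.

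The second step is a counting argument inside $T_R$. The node $v_R$ is at depth $j+1$, and each of its descendant leaves has depth $\ge\ell$, i.e., relative depth $\ge m:=\ell-j-1\ge 0$ within $T_R$. Therefore no node of $T_R$ at relative depth $<m$ is a leaf, so — the tree being full — it has two children; consequently all $2^{m}$ nodes at relative depth exactly $m$ are present in $T_R$, and each carries at least one leaf beneath it, so $|T_R|\ge 2^{m}=2^{\ell-j-1}$. (Equivalently, via Kraft's equality: the leaves below $v_L$ and those below $v_R$ each have total weight $2^{-(j+1)}$, while every leaf of $T_R$ has weight at most $2^{-\ell}$.) Since $C_i\notin T_R$, we obtain $\sigma>|T_R|\ge 2^{\ell-j-1}$, equivalently $2^{\,|C_i|-j-1}<\sigma$, which yields claim~(b).

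I expect only two points to require care. The first is verifying that $\mathtt 1^{j}$ really is a node and that it is internal: this is exactly where the choice of $j$ as the \emph{maximal} run of ones with $j<\ell$ is used, since it guarantees both $b_{j+1}=\mathtt 0$ and that the path continues strictly below $u$. The second is the elementary lemma that a full binary tree all of whose leaves have depth at least $m$ contains the complete binary tree of depth $m$, and hence has at least $2^m$ leaves — routine, but it is the quantitative heart of the bound. Everything else is bookkeeping, so I do not anticipate any genuine obstacle.
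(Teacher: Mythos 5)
Your proof is correct and follows essentially the same argument as the paper: both isolate the subtree rooted at $\mathtt{1}^{j+1}$ (the right sibling of the node with label $\mathtt{1}^{j}\mathtt{0}$), use canonicity to force all of its leaves to depth at least $|C_i|$, and use fullness of the tree to count at least $2^{|C_i|-j-1}$ leaves there, whence $2^{|C_i|-j-1}<\sigma$. The only presentational difference is that the paper assumes at the outset, ``without loss of generality,'' that $\sigma$ is a power of two so that the final passage from $2^{|C_i|-j-1}<\sigma$ to the integer inequality $j\ge|C_i|-\lg\sigma$ is exact; your write-up leaves that same rounding implicit.
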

\begin{proof}

   \newcommand{\Nleft}{\ensuremath{v_{\mathup{l}}}}
\newcommand{\Nright}{\ensuremath{v_{\mathup{r}}}}
Without loss of generality, let $\sigma$ be a power of two, i.e., $\lg \sigma$ is an integer.
Let us assume there is a codeword~$C_i = \mathtt{1}^j b_{j+1} b_{j+2} \ldots b_{|C_i|}$ with $b_{j+1} = 0$. The length of the suffix~$S = b_{j+2} \ldots b_{|C_i|}$ of $C_i$ is $s := |S| = |C_i| - j - 1$. 
The prefix $\mathtt{1}^j b_{j+1}$ of $C_i$ is the string label of a node $\Nleft$ in the canonical tree.
The height of $\Nleft$ is at least $s$ since it has a leaf whose string label is $C_i$.
Since a canonical tree is a full binary tree, $\Nleft$ has a right sibling, which we call $\Nright$.
Since the depths of the leaves iterated in the left-to-right order in a canonical tree are non-decreasing,
all leaves of the tree rooted at $\Nright$ must be at depth at least $s$.
This implies that the number of leaves in the tree rooted in $\Nright$ is at least $2^{s}$.
There are at least two leaves in the tree rooted in $\Nleft$.
Moreover, the two trees rooted in $\Nleft$ and $\Nright$, respectively, are disjoint.
Consequently, there are at least $2^s + 2$ leaves in the tree representing $\Codewords$.
For $s \ge \lg \sigma$, we obtain a contradiction since the code tree has exactly $\sigma$ leaves.
Hence, $|C_i| - j - 1 = s < \lg \sigma$, i.e., $j \ge |C_i| - \lg \sigma$.
\end{proof}

The lemma states that given a codeword perceived as a concatenation of a maximal run of set \bsq{$\mathtt{1}$} bits and a suffix that follows it, 
the length of this suffix is less than $\lg \sigma$.
Given a length~$\ell$ and the lexicographically smallest codeword with length~$\ell$, and assume that it is of Case~(b) described in \cref{thmCodewordSuffixes}, 
then the proof of \cref{thmCodewordSuffixes} states we can charge it for $\Oh{2^s}$ nodes in the following argumentation:
Since the number of nodes is $2\sigma -1$ in a code tree,
we have that the number of codewords with a suffix of Case~(b) with a length $s > s'$ for a fixed $s'$
is at most $\Oh{\sigma/2^{s'}}$ by the pigeonhole principle.
We formalize this observation as follows:

\newcommand*{\BinaryAlphabet}{\ensuremath{\{\mathtt{0},\mathtt{1}\}}}

\begin{corollary}\label{corCodewordSuffix}
Given a length~$s$, the number of codewords~$C_i$ of a canonical code
given by $C_i = \mathtt{1}^p \mathtt{0} S$  with $S \in \BinaryAlphabet^{s-1}$ and $s \ge |C_i| - p$
is bounded by $\Oh{\sigma / 2^s}$.
\end{corollary}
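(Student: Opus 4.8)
\textbf{Proof proposal for \cref{corCodewordSuffix}.}

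The plan is to count the codewords of the stated form by partitioning them according to their length and invoking the charging argument sketched right before the corollary statement. First I would fix $s$ and consider a codeword $C_i = \mathtt{1}^p\mathtt{0}S$ with $S \in \BinaryAlphabet^{s-1}$ and $s \ge |C_i| - p$; note that the constraint $s \ge |C_i| - p$ combined with $|C_i| = p + 1 + |S| = p + s$ forces $|C_i| - p = s$ exactly, so the suffix $\mathtt{0}S$ following the maximal run of ones has length precisely $s$. This is Case~(b) of \cref{thmCodewordSuffixes}, and the argument in that proof shows that the node $\Nleft$ with string label $\mathtt{1}^p\mathtt{0}$ has a right sibling $\Nright$ whose subtree contains at least $2^{s-1}$ leaves (the suffix after $\mathtt{1}^p\mathtt{0}$ has length $s-1$, so $\Nleft$ has height at least $s-1$, and by the canonical ordering every leaf below $\Nright$ is at depth at least $s-1$).

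Next I would observe that distinct codewords of this form, even of possibly different lengths $|C_i|$ (hence different values of $p$), give rise to disjoint subtrees rooted at their respective $\Nright$ nodes: two such nodes $\Nright$ and $\Nright'$ have string labels $\mathtt{1}^p\mathtt{1}$ and $\mathtt{1}^{p'}\mathtt{1}$; if $p \ne p'$, say $p < p'$, then the label of $\Nright'$ has $\mathtt{1}^{p+1}$ as a prefix and would lie in the subtree of $\Nleft' $... more carefully, $\Nright$ has label $\mathtt{1}^{p+1}$ and $\Nright'$ has label $\mathtt{1}^{p'+1}$, so when $p < p'$ the node $\Nright'$ is a descendant of $\Nleft = \mathtt{1}^{p}\mathtt{0}$'s sibling subtree only if $\mathtt{1}^{p+1}$ is a prefix of $\mathtt{1}^{p'+1}$, which it is — so I must be slightly careful: the subtree of $\Nright = \mathtt{1}^{p+1}$ \emph{contains} the subtree of $\Nright' = \mathtt{1}^{p'+1}$ when $p < p'$. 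So the subtrees are nested, not disjoint. The cleaner disjointness to use is between $\Nleft = \mathtt{1}^p\mathtt{0}$ (whose subtree holds the leaf $C_i$) and $\Nright' = \mathtt{1}^{p'+1}$ for $p' \ge p$: since $\mathtt{1}^{p}\mathtt{0}$ and $\mathtt{1}^{p'+1} \supseteq \mathtt{1}^{p+1}$ differ at position $p+1$, these subtrees are disjoint. This lets me charge each codeword $C_i$ to the $\ge 2^{s-1}$ leaves hanging below its \emph{own} $\Nright$, but I need the charged leaf-sets across different $C_i$ to be disjoint. For equal length (equal $p$) distinct codewords, the $\mathtt{0}S$ suffixes differ, so the leaves are distinguished within $\Nleft$'s subtree; for unequal $p$, I should instead charge $C_i$ to leaves that are in $\Nright$'s subtree but not in any $\Nright''$ subtree for larger $p'' $ — i.e., to leaves whose label starts with $\mathtt{1}^{p}\mathtt{1}\mathtt{0}$. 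By the canonical ordering and the same subtree-size argument applied one level down, there are still $\Omega(2^{s})$...

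The hard part, then, will be getting the disjointness of the charged leaf-sets exactly right so that the pigeonhole bound $\Oh{\sigma/2^s}$ drops out cleanly; the geometry of nested $\mathtt{1}$-prefixes is the only subtlety, and I expect the right fix is to charge each qualifying codeword $C_i = \mathtt{1}^p\mathtt{0}S$ to the full subtree below the node with label $\mathtt{1}^p\mathtt{0}$, which has $\ge 2^{s-1}$ leaves (height $\ge s-1$ from carrying the length-$(s-1)$ suffix $S$), and to note that for two qualifying codewords with $p \ne p'$ these subtrees are disjoint because the labels $\mathtt{1}^p\mathtt{0}$ and $\mathtt{1}^{p'}\mathtt{0}$ are prefix-incomparable, while for $p = p'$ they coincide but contain $2^{s-1} \ge$ (number of length-$(s-1)$ suffixes, which is at most $2^{s-1}$, hence exactly enough) distinct leaves. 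Thus the union over all qualifying codewords of these leaf-sets has size $\ge (\text{number of qualifying codewords}) \cdot 1$ when $p$ varies but we only get a factor $2^{s-1}$ when we can also separate within a fixed $p$; combining, every qualifying codeword can be injectively assigned $\Omega(2^{s})$ leaves of the code tree (using the $\mathtt{1}^p\mathtt{1}$-sibling subtree of size $\ge 2^{s-1}$ together with, say, the leaf $C_i$ itself to break ties within equal $p$). Since the tree has exactly $\sigma$ leaves, the number of qualifying codewords is $\Oh{\sigma/2^{s}}$, which is the claim. I would finish by remarking that, as in \cref{thmCodewordSuffixes}, dropping the assumption that $\sigma$ is a power of two only affects constant factors and is absorbed into the $\Oh{\cdot}$ notation.
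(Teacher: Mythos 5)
You rightly notice that the $\Nright_i=\mathtt{1}^{p_i+1}$ subtrees from the proof of \cref{thmCodewordSuffixes} are nested for different $p_i$ and hence cannot be charged disjointly, which is indeed a gap that the informal argument preceding the corollary glosses over. However, your proposed repair is flawed. You claim the subtree rooted at $\mathtt{1}^p\mathtt{0}$ has at least $2^{s-1}$ leaves because it has height at least $s-1$; but having height $\ge s-1$ in a full binary tree only guarantees on the order of $s$ leaves, not $2^{s-1}$. The $2^{s-1}$ bound for the sibling $\Nright$ comes from the canonical-tree property that \emph{all} of $\Nright$'s leaves lie at relative depth at least $s-1$, and that property does \emph{not} transfer to $\Nleft=\mathtt{1}^p\mathtt{0}$, whose subtree may well contain shallower leaves. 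Concretely, take the canonical code $\{00,\,010,\,011,\,1000,\dots,1111\}$ ($\sigma=11$): the length-$3$ codeword $010$ has $p=0$ and $s=3$, but the subtree rooted at $\mathtt{0}$ has only $3$ leaves, short of $2^{s-1}=4$. Furthermore, the injection "within a fixed $p$ there are at most $2^{s-1}$ such codewords" you appeal to is addressing the wrong object: the corollary (as it is used throughout \cref{sec:results}) counts codewords stored in $\Fusion$, i.e.\ the lexicographically smallest codeword per length, not all codewords of the code. If one counted all codewords of the stated shape, the claim would simply be false---in a perfect binary tree of height $h$ there are $2^{s-1}$ codewords with suffix length $s$, which exceeds $\Oh{\sigma/2^s}=\Oh{2^{h-s}}$ once $s>h/2$.

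A repair that does go through is the following: for codewords stored in $\Fusion$ with suffix length exactly $s$, the lengths and hence the $p_i$'s are pairwise distinct; charge $C_i$ to the subtree rooted at $\mathtt{1}^{p_i+1}\mathtt{0}$ (the left child of $\Nright_i$). These labels are pairwise prefix-incomparable for distinct $p_i$, so the charged subtrees are disjoint, and since every leaf of $\Nright_i$ has relative depth at least $s-1$, every leaf of its left child has relative depth at least $s-2$, giving at least $2^{s-2}$ leaves. Pigeonhole over the $\sigma$ leaves then yields the stated $\Oh{\sigma/2^s}$ bound, and summing a geometric series gives the bound for suffix lengths at least $s$ that the paper actually uses afterwards. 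So: the gap you spotted is real, but the subtree of $\mathtt{1}^p\mathtt{0}$ is not large enough to carry the charging; you should move one step into the right sibling and exploit the fact that the $\Fusion$ codewords have distinct lengths.
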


In what follows, we want to derive a partition of the codewords stored in \Fusion{}.
Let us recall that \Fusion{}
stores not all codewords, but the codeword of the \emph{leftmost} leaf on each depth (and omits depth~$d$ if there is no leaf with depth~$d$).
Given a set of such codewords $C_1, \ldots, C_{\maxLength'}$ with $\maxLength' \le \maxLength$,
let $s_i$ denote the length of the shortest suffix of the representation $C_i = \mathtt{1}^p S$ with $S \in \BinaryAlphabet^{s_i}$.
In what follows, we call a codeword~$C_i$ \emph{long-tailed} if $s_i \ge 2 \lg \lg \sigma = \lg \lg^2 \sigma$, and otherwise \emph{short-tailed}.
A consequence of \cref{corCodewordSuffix} is that $\Oh{\sigma / \lg^2 \sigma}$ codewords can be long-tailed.
In what follows, we manage long-tailed and short-tailed codewords separately.

\subsection{Long-Tailed Codewords}\label{secLongTailed}
\newcommand*{\TinyFusion}[1]{\ensuremath{\mathsf{F}_{#1}}}

We partition the long-tailed codewords into sets $\Codewords_1, \ldots, \Codewords_m$ with $\Codewords_k$ being the set of codewords of lengths within the range $[1+(k-1)\lg \sigma, k\lg \sigma]$, for each $k \in [1..m]$.
By \cref{thmCodewordSuffixes}, we know that codewords in $\Codewords_k$ have the shape $PS$ with $P \in \{\mathtt{1}\}^*$, $|P| > (k-1) \lg \sigma$ and $|S| \le \lg \sigma$.
This means that we can represent a codeword of $\Codewords_k$ by its suffix~$S$ of length at most $\lg \sigma$.
Instead of representing \Fusion{} with a single fusion tree storing the lexicographically smallest codeword of~$\Codewords$ for each depth,
we maintain for each such set $\Codewords_k$ a dedicated fusion tree~\TinyFusion{k}.
To know which fusion tree to consult during a query with a bit string~$B$ read from a binary stream, we compute the longest prefix~$P \in \{\mathtt{1}\}^*$ of~$B$ (that is, a maximal run of ones).
We can do that by asking for the position most significant set bit of $B$ bit-wise negated,
which can be computed in constant time~\cite{fredman93fusion}.
Given that the most significant set bit of $B$ is at position $h \ge 1$ (counting from the leftmost position of $B$, which is $1$),
we know that $|P| = h-1$,
and consult $\TinyFusion{\upgauss{|P|/\lg \sigma}}$.

This works fine unless we encounter the following two border cases:
Firstly, $P$ is empty if and only if $h=1$, which happens when the next codeword is the lexicographically smallest codeword~$C_1$ with $C_1 = \texttt{0}^{|C_1|}$.
In that case, we know already that the answer is $C_1$ and are finished, so we treat $C_1$ individually. 
Secondly, the predecessor is not stored in the queried fusion tree, but in one built on shorter codewords.
To treat that problem, for $k \in [2..m]$, we store the dummy codeword $\mathtt{0}$ in \TinyFusion{k}
and cache the largest value of $\TinyFusion{1},\ldots,\TinyFusion{k-1}$ in \TinyFusion{k} such that \TinyFusion{k} returns this cached value instead of $\mathtt{0}$ 
in the case that $\mathtt{0}$ is the returned predecessor. 
Finally, we treat the border case for \TinyFusion{1}, in which we store the smallest codeword $C_1 = \First[1] = \mathtt{0}^{|C_1|}$ regardless of whether $C_1$ is long-tailed or not (in that way we are sure that a predecessor always exists), but store this codeword with the same amount of bits as the other codewords trimmed to their long-tailed suffix of $\lg \sigma$ bits.

Since the lengths of the codewords (before truncation) in the sets  $\Codewords_1, \ldots, \Codewords_m$ are pairwise disjoint, 
the number of elements stored in the fusion trees $\TinyFusion{1}, \ldots, \TinyFusion{m}$ is the number of long-tailed codewords, which is bounded by \Oh{\sigma/ \lg^2 \sigma}.
A key in a fusion tree represents a long-tailed codeword~$C$ by a pair consisting of $C$'s suffix of length $\lg \sigma$ (thus using $\lg \sigma$ bits)
and $C$'s lengths packed in $\lg \maxLength \le \lg \sigma$ bits.
Our total space for the long-tailed codewords is \Oh{\sigma / \lg \sigma} bits,
which are stored in \Oh{\maxLength / \lg \sigma} fusion trees.

\subsection{Short-Tailed Codewords}
Similar to our strategy for long-tailed codewords, we partition the short-tailed codewords by their total lengths.
By doing so, we obtain similarly a set of codewords
$\Codewords'_1, \ldots, \Codewords'_{m'}$ with $\Codewords'_k$ being the set of codewords of lengths within the range $[1+(k-1)\lg\lg^2 \sigma, k\lg\lg^2 \sigma]$. 
Like in \cref{secLongTailed}, we build a fusion tree on each of the codeword sets~$\Codewords'_k$ with the same logic.

But this time we know that each codeword~$C_i$ has the shape $\mathtt{1}^{|C_i|-s_i} B$ for $B \in \{0,1\}^{s_i}$ and $s_i < 2 \lg \lg \sigma$.
Additionally, we represent the length of the codeword~$C_i$ in the set $\Codewords'_k$ by the local depth within $\Codewords'_k$ as a $\Oh{\lg \lg \lg \sigma}$-bit integer.
Hence, the key of a fusion tree is composed by the $(2 \lg \lg \sigma)$-bit long suffix and the local depth with $\Oh{\lg \lg \lg \sigma}$ bits.
Therefore, the total space for the short-tailed codewords is $\Oh{\maxLength \lg \lg \sigma}$ bits,
which are stored in $\Oh{\maxLength / \lg \lg \sigma}$ fusion trees.

\subsection{Complexities}

Summing up the space for the short- and long-tailed codewords, we 
obtain $\Oh{\maxLength \lg \lg \sigma + \sigma / \lg \sigma}$ bits,
which are stored in $\Oh{\maxLength / \lg \lg \sigma}$ fusion trees.
To maintain these fusion trees in small space, 
we assume that we have access to a separately allocated RAM of above stated size to fit in all the data.
While we have a global pointer of $\Oh{w}$ bits to point into this space, 
pointers inside this space take $\Oh{\lg \sigma}$ bits.
Hence, we can maintain all fusion trees inside this allocated RAM with an extra space of $\Oh{\maxLength \lg \sigma / \lg \lg \sigma}$ bits.
In total, the space for $\Fusion$ is  $\Oh{\maxLength \lg \sigma / \lg \lg \sigma + \sigma / \lg \sigma} = \oh{\maxLength \lg\sigma + \sigma}$ bits.

To answer $\Fusion.\predecessor(B)$, we find the predecessor of $B$ among the short- and long-tailed codewords, and take the maximum one.
In detail, this is done as follows:
Remembering the decoding process outlined in \cref{secQueries}, $B$ is a bit string of length~$\maxLength$,
which we delegate to the fusion tree~$\TinyFusion{k}$ corresponding to either $\Codewords_k$ or $\Codewords'_k$ (we try both of them) if 
the longest unary prefix of \bsq{\texttt{1}}s in $B$ is in $[1+(k-1)\lg \sigma, k\lg \sigma]$ (long-tailed) or $[1+(k-1)\lg\lg^2 \sigma, k\lg\lg^2 \sigma]$ (short-tailed).
By delegation we mean that we remove this unary prefix, take the $\lg \sigma$ most significant bits (long-tailed) or the $2 \lg \lg \sigma$ most significant bits (short-tailed) from $B$, and stored these bits in a variable $B'$ used as the argument for the query $\TinyFusion{k}.\predecessor(B')$.

Finally, it remains to treat \First{} for encoding a character (cf.~\cref{secQueries}).
One way would be to partition \First{} analogously like \Fusion{}. 
Here, we present a solution based on the already analyzed bounds for \Fusion{}:
We create a duplicate of \Fusion{}, named \Fusion{}', whose difference to \Fusion{} is that the components of the stored pairs are swapped,
such that a predecessor query is of the form $\Fusion'.\predecessor((\ell, B))$ for a length~$\ell$ and a bit string $B$ of length $\maxLength$.
Then $\Fusion'.\predecessor((\ell, \mathtt{1}^{\maxLength} )) = (\ell', B')$, and
the $\ell$ first/leftmost bits of $B'$ are equal to $\First[\ell]$ if $\ell = \ell'$ (otherwise, if $\ell \not= \ell'$, then there is no leaf at depth~$\ell$).

\section{Conclusion and Open Problems}

Canonical codes (e.g., Huffman codes) are an interesting subclass of general prefix-free compression codes, allowing for compact representation without sacrificing character encoding and decoding times.
In this work, we refined the solution presented by Gagie et al.~\cite{gagie15efficient} and showed how to represent a canonical code, for an alphabet of size $\sigma$ and the maximum codeword length of $\maxLength$, in $\sigma \lg \maxLength (1 + \oh{1})$ bits, capable to encode or decode a symbol in constant worst case time. 
Our main idea was to store codewords not in their plain form, but partition them by their lengths and by the length of the shortest suffix covering all \bsq{\texttt{0}} bits of a codeword such that we can discard unary prefixes of \bsq{\texttt{1}}s from all codewords.

This research spawns the following open problems:
First, we wonder whether the proposed data structure works in the $AC^0$ model.
As far as we are aware of, the fusion tree can be modeled in the $AC^0$ model~\cite{andersson99fusion}, and our enhancements do not involve complicated operations except finding the most significant set bit, which can be also computed in $AC^0$~\cite{benkiki14towards}.
The missing part is the wavelet tree, for which we used the implementation of \citet[Thm~4.1]{belazzougui15wavelet}. 
We think that most bit operations used by this implementation are portable, and the used multiplications are not of general type, but a broadword broadcast operation, i.e., storing $\gauss{w/b}$ copies of a bit string of length~$b$ in a machine word of $w$ bits.
However, they rely on a monotone minimum perfect hash function, for which we do not know whether there is an existing alternative solution (even allowing a slight increase in the space complexity but still within $\sigma\lg\maxLength (1 + \oh{1})$ bits) in $AC^0$.

Another open question concerns the possibility of applying the presented technique in an adaptive (sometimes also called dynamic) Huffman coding scheme.
There exist efficient dynamic fusion tree and multi-ary wavelet tree implementations,
but it is unclear to us whether we can meet those costs (at least in the amortized sense) involved in maintaining the code tree dynamically.
Also, we are curious whether, for a small enough alphabet, a canonical code can encode and decode $k$ ($k > 1$) symbols at a time (ideally, $k = \Ot{w / \maxLength}$ or $k = \Ot{\log n / \maxLength}$, where the denominator could be changed to $\log\sigma$, if focusing on the average case) without a major increase in the required space.

\paragraph{Acknowledgments}
This work is funded by the JSPS KAKENHI Grant Number \texttt{JP21K17701}.

\bibliographystyle{abbrvnat}

\appendix
\section{Fusion Tree Augmentation}\label{secAugmentation}
Here, we provide an alternative solution to \cref{secMissingLeaves} without the need to let $\Fusion$ store pairs of codewords with their respective lengths.
Instead, the idea is to augment \Fusion{} with additional information to allow queries needed for simulating $\First$.
For that, we follow \citet[Section IV.]{patrascu14dynamic}, who presented a solution for a dynamic fusion tree that can answer the following queries, where $\mathcal{S}$ is the set of keys defined in \cref{secFormerApproach}.
\begin{itemize}
  \item $\rank(X)$ returns $| \{ Y \in \mathcal{S} : Y \le X \} |$ and
  \item $\select(i)$ returns $Y \in \mathcal{S}$ with $\rank(Y) = i$.
\end{itemize}
Since $\Fusion$ is a static data structure, adding these operations to $\Fusion$ is rather simple:
The idea is to augment each fusion tree node with an integer array storing the prefix-sums of subtree sizes.
Specifically, let $v$ be a fusion tree node having $w^c$ children, for a fixed (but initially selectable) positive constant $c < 1$.
Then we augment~$v$ with an integer array~$P_v$ of length $w^c$ such that $P_v[i]$ is the sum of the sizes of the subtrees rooted at the preceding siblings of $v$'s $i$-th child. 
With $P_v$ we can answer $\rank(x)$ via $\Fusion.\predecessor(x)$,
which is solved by traversing the fusion tree~$\Fusion$ in a top-down manner.
Starting with a counter~$c$ of the rank initially set to zero, 
on visiting the $i$-th child of a node~$v$, we increment $c$ by $P_v[i]$.
On finding $\predecessor(x)$ we return $c+1$ (+1 because the predecessor itself needs to be counted).

To answer a $\select$ query, we store the content of each $P_v$ in a (separate) fusion tree~$F_v$ to support a 
$\rank$ query on the prefix-sum values stored in $P_v$.
Consequently, a node of \Fusion{} stores not only $P_v$ but also a fusion tree built upon $P_v$.
The algorithm works again in a top-down manner, but uses $F_v$ for navigation:
For answering $\select(i)$, suppose we are at a node~$v$, 
and suppose that $F_v.\rank(i)$ gives us $j$.
Then we exchange $i$ with $i - P_v[j]$.
Now if $i$ has been decremented to zero, we are done since the answer is the $j$-th child of~$v$.
Otherwise, we descend to the $j$-th child of~$v$ and recurse.

The fusion trees $F_v$ as well as the prefix-sums $P_v$ take asymptotically the same space as its respective fusion tree node~$v$.\footnote{Since our tree~$\Fusion$ has a branching factor of $w^c$,
there are \Oh{m/w^c} leaves and \Oh{m/w^{2c}} internal nodes.
Since an internal node takes $w^{1+c}$ bits, we need $m/w^c$ bits for all internal nodes.
Also, a leaf~$\lambda$ does not need to store $F_\lambda$ and $P_\lambda$ since $P_\lambda[i] = i$.
}
Regarding the time, $F_v$ and $P_v$ answer a $\rank$ and an access query in $\Oh{\log_w w^c} = \Oh{1}$ time, 
and therefore, we can answer $\rank$ and $\select$ in the same time bounds as $\predecessor$.

It is left to deal with depths having no leaves.
For that, we add a bit vector \BV{D} of length~$\maxLength$ with $\BV{D}[\ell] = 1$ if and only if depth~$\ell$ has at least one leaf.
For the latter solution, $\Fusion$ only needs to take care of all depths in which leaves are present, i.e., 
$\Fusion$ will return a depth~$\ell'$ during a predecessor query, 
which we map to the correct depth with $\BV{D}.\select_1[\ell']$,
given that we endowed $\BV{D}$ with a select-support data structure in a precomputation step.
In total, \BV{D} with its select-support data structure takes $\maxLength + \oh{\maxLength}$ bits of space, and answers a select query in constant time. 
We no longer need to store $\First$ since $\First[\ell] = \Fusion.\select(\ell)$, given there is a leaf of the Huffman tree with depth $\ell$. 
Consequently, our approach shown in \cref{sec:results} works with this augmented fusion tree analogously.

\end{document}